\newtheorem{Theorem}{Theorem}
\newtheorem{Definition}{Definition}
\begin{document}

\title{Nearest Neighbour Distance Distribution in Hard-Core Point Processes}
\author{Akram~Al-Hourani \IEEEmembership{Member,~IEEE}, Robin J. Evans~\IEEEmembership{Life Fellow,~IEEE}, and \\Sithamparanathan Kandeepan \IEEEmembership{Senior Member,~IEEE}
%\author{Author 1, Author 2, and Author 3
\thanks{Manuscript received xx-xxx-2016; revised xx-xxx-2016. The research was supported by the Australian Research Council Discovery Project Grant “Cognitive Radars for Automobiles”,	ARC grant number DP150104473.}
\thanks{A. Al-Hourani, R. Evans are with the department of electrical and electronic engineering, The University of Melbourne, Melbourne, Australia. E-mail: \{akram.hourani, robinje\}@unimelb.edu.au}
\thanks{S. Kandeepan, is with the school of electrical and computer engineering, RMIT University, Melbourne, Australia. E-mail: kandeepan.sithamparanathan@rmit.edu.au}
}
\maketitle

\begin{abstract}
In this paper we present an analytic framework for formulating the statistical distribution of the nearest neighbour distance in hard-core point processes. We apply this framework to Mat\'{e}rn hard-core point process (MHC) to derive the cumulative distribution function of the contact distance in three cases. The first case is between a point in an MHC process and its nearest neighbour from the same process. The second case is between a point in an independent Poisson point process and the nearest neighbour from an MHC process. The third case is between a point in the complement of an MHC process and its sibling MHC process. We test the analytic results against Monte-Carlo simulations to verify their consistency.
\end{abstract}

\begin{IEEEkeywords}
Mat\'{e}rn hard-core point process, contact distance, stochastic geometry, nearest neighbour.
\end{IEEEkeywords}

\IEEEpeerreviewmaketitle

\section{Introduction}
\IEEEPARstart{H}{ard}-core point processes have wide applications in modelling the geometrical randomness in wireless network nodes \cite{Haenggi_Interference} and in many other fields \cite{teichmann2013generalizations}. Their ability to capture the mutual repulsion between points can better represent many physical systems. For example, the deployment of cellular wireless base stations (BS) has a certain degree of randomness because of the uncertainties related to the availability of land, utility, planning permits and other socio-economic factors, however the locations of two BSs have strong correlation and cannot occur closer than a predefined minimum distance. Another example is modelling vehicle locations on a road, where two cars cannot drive in very close proximity. These constraints in physical systems are usually ignored when modelling using Poisson point process (PPP), with zero correlation between the location of points. However its tractability might sometimes justify the resulting inaccuracy.

An alternative to PPP, hard-core point processes which impose a protection ball $b(x,\delta)$ around every point $x \in \Phi$ in the process $\Phi$, where no other point is allowed to be within this ball of radius $\delta$ called the \emph{hard-core distance}. However, hard-core processes are less tractable and more difficult to analyse. In particular, the nearest neighbour distance for PPP has long been known \cite{Book_Haenggi} to follow a generalized gamma distribution, however it is not known in many of the hard-core processes. One important hard-core point process is Mat\'{e}rn hard-core point process (MHC) \cite{Matern_Original}\cite{Book_Haenggi} that has a tractable density and it efficiently \emph{thins} a parent PPP to satisfy the hard-core condition, where this thinning transformation naturally reduces the density of points. MHC has significant applications in modelling carrier sense multiple access (CSMA), where the sensing radius is modelled as the hard-core distance, in this case the contact distance is of a paramount importance to determine the closest access point \cite{6365639} or the closest cluster-head in wireless sensor networks \cite{7417659}. MHC can also be applied to cellular base stations, where the received signal power is mainly dependent on the contact distance \cite{7460095}.

In this paper we present an analytic framework to calculate the cumulative distribution function of the contact distance to the first nearest neighbour. This framework can be applied to any point process given that its conditional retention probability function $\eta_{m \to c}(r)$ is known, and given that it is a result of a thinning transformation from a parent homogeneous PPP. We employ this approach on the MHC process to obtain the contact distance for three different cases. Between a point in:
\begin{itemize}
	\item MHC process and its nearest neighbour from the same MHC process.
	\item PPP process and the nearest neighbour from an independent MHC process.
	\item CMHC process and its nearest neighbour from the sibling MHC process. 
\end{itemize}
Where CMHC refers to the complementary Mat\'{e}rn hard-core point process defined as the residual points of the thinning transformation from PPP to MHC.% \cite{CMHC_arxive}.

\section{Analytic Framework}
A hard-core point process can be generated from a thinning transformation of a parent PPP process $\Phi_p$, where some points migrate to the child hard-core process $\Phi_c$. Given the following two conditions: (i) two points $x_o \in \Phi_m$ and $x \in \Phi_p$ exist and are separated by a distance $r$, where $\Phi_m$ is a certain point process with respect to which the nearest distance to $\Phi_c$ is measured, and (ii) the ball centred at $x_o$ of radius $r$ is void of points in $\Phi_c$, we define the conditional thinning Palm-probability as the probability of the point $x$ to migrate to the child process $\Phi_c$.
\begin{Definition}
	The conditional thinning Palm-probability:
	\begin{equation}
	 \eta_{m \to c}(r)\stackrel{\bigtriangleup}{=} \mathbb{P}\left[x \in \Phi_c  | \Phi_c \cap b(x_o,r) = \emptyset, x_o \in \Phi_m \right],
	\end{equation}
\end{Definition} 

Accordingly, the following theorem is shown to hold:
\begin{Theorem}
	The cumulative distribution function of the contact distance is given by:
	\begin{equation} \label{eq_main}
		F_{{m \to c}}(R) = 1 - \exp \left( - \int_{0}^{R} 2 \pi r \lambda_p \eta_{m\to c}(r) ~ \mathrm{d} r\right)
	\end{equation}
\end{Theorem}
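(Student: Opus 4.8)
The plan is to reduce the statement to a void probability and then grow that void region radially. Write $D$ for the contact distance from $x_o$ to the nearest point of $\Phi_c$; by definition $F_{m \to c}(R) = \mathbb{P}[D \le R \mid x_o \in \Phi_m] = 1 - \mathbb{P}[\Phi_c \cap b(x_o,R) = \emptyset \mid x_o \in \Phi_m]$, so it suffices to show that the void probability $V(R) := \mathbb{P}[\Phi_c \cap b(x_o,R) = \emptyset \mid x_o \in \Phi_m]$ equals $\exp\!\bigl(-\int_0^R 2\pi r \lambda_p \eta_{m \to c}(r)\, \mathrm{d}r\bigr)$. The essential structural fact I would invoke is that $\Phi_c$ arises by thinning the homogeneous parent PPP $\Phi_p$, so $\Phi_c \subseteq \Phi_p$: a point of $\Phi_c$ can appear in $b(x_o,R)$ only where a point of $\Phi_p$ already lies, and $\{D > R\}$ is precisely the event that none of the parent points inside $b(x_o,R)$ has migrated to $\Phi_c$.

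Next I would set up a differential (equivalently, a hazard-rate) relation for $V$ by inspecting the incremental annulus $A_{\mathrm{d}R} := b(x_o, R+\mathrm{d}R) \setminus b(x_o,R)$. Conditioned on $x_o \in \Phi_m$ and on $\{\Phi_c \cap b(x_o,R) = \emptyset\}$, two things happen to first order in $\mathrm{d}R$: by homogeneity of $\Phi_p$ the number of parent points in $A_{\mathrm{d}R}$ has mean $2\pi R \lambda_p\, \mathrm{d}R$, so there is at most one such point up to $o(\mathrm{d}R)$; and, given one such point at distance $R$, it migrates to $\Phi_c$ with probability $\eta_{m \to c}(R)$ — this is exactly Definition~1, whose conditioning hypotheses $\Phi_c \cap b(x_o,R) = \emptyset$ and $x_o \in \Phi_m$ are already in force. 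Hence the probability that $A_{\mathrm{d}R}$ contains a point of $\Phi_c$, given the inner ball is void, is $2\pi R \lambda_p \eta_{m \to c}(R)\, \mathrm{d}R + o(\mathrm{d}R)$, so $V(R+\mathrm{d}R) = V(R)\bigl(1 - 2\pi R \lambda_p \eta_{m \to c}(R)\, \mathrm{d}R\bigr) + o(\mathrm{d}R)$, i.e. $V'(R) = -2\pi R \lambda_p \eta_{m \to c}(R)\, V(R)$ with $V(0) = 1$.

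Solving this linear ODE gives $V(R) = \exp\!\bigl(-\int_0^R 2\pi r \lambda_p \eta_{m \to c}(r)\, \mathrm{d}r\bigr)$, and substituting into $F_{m \to c}(R) = 1 - V(R)$ yields \eqref{eq_main}. An alternative, discrete version of the same idea orders the points of $\Phi_p$ by increasing distance $r_1 \le r_2 \le \cdots$ from $x_o$ (almost surely no ties), observes that $\{\Phi_c \cap b(x_o,R) = \emptyset\} = \bigcap_{i:\, r_i \le R}\{x_i \notin \Phi_c\}$, expands the probability of this intersection as the telescoping product $\prod_{i:\, r_i \le R}\bigl(1 - \eta_{m \to c}(r_i)\bigr)$ — each factor legitimate because, by $\Phi_c \subseteq \Phi_p$, conditioning on all nearer parent points being unretained coincides with the void hypothesis of Definition~1 — and then averages over $\Phi_p$ using the Poisson void / probability-generating-functional structure, which turns the product over a PPP into the exponential integral.

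The step I expect to be the main obstacle is the rigorous justification that the retention of a parent point in $A_{\mathrm{d}R}$ is governed by $\eta_{m \to c}(R)$ \emph{alone} — that conditioning on the finer information of how $\Phi_c$ is configured in $b(x_o,R)$, beyond the bare fact that it is empty, does not further bias that retention — together with the accompanying Poisson technicalities (no two parent points equidistant from $x_o$, at most one parent point per annulus up to $o(\mathrm{d}R)$, and the relevant reduced Palm intensity of $\Phi_p$ given $x_o \in \Phi_m$ remaining $\lambda_p$). These are, in effect, precisely the properties that Definition~1 and the hypothesis of a homogeneous-PPP thinning are crafted to supply, so once they are granted the remainder is the elementary ODE/product computation above.
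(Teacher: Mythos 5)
Your proposal is correct and is essentially the paper's own argument: your hazard-rate step --- that given $x_o \in \Phi_m$ and a void ball $b(x_o,R)$ the incremental annulus contains a point of $\Phi_c$ with probability $2\pi R \lambda_p \eta_{m\to c}(R)\,\mathrm{d}R + o(\mathrm{d}R)$ --- is exactly the paper's equation for $v_{m\to c}(r)$, and your ODE $V'(R) = -2\pi R \lambda_p \eta_{m\to c}(R) V(R)$ is just the continuum form of the paper's product over $N$ annuli followed by a logarithm and a Riemann-sum limit. Even the subtlety you flag (that retention in the outer annulus is governed by $\eta_{m\to c}$ given only the bare void event) is the same implicit assumption the paper makes when it identifies $\mathbb{P}[B_n \mid \bigcap_{i<n} B_i]$ with $v_{m\to c}(r_n)$.
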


\begin{proof}
Starting from a parent PPP $\Phi_p$ with a homogeneous intensity $\lambda_p$ and given a bounded Borel set $A \subset \mathbb{R}^2$, then the void probability is given by \cite{Book_Baddeley} as $\exp(-|A|\lambda_p)$ representing the probability that no point exists in $A$, while $|.|$ is the Lebesgue measure. Thus, if we take an infinitesimal annulus of width $\mathrm{d}r$ and radius $r$, then its void probability is $v=\exp(-2\pi r \lambda_p~\mathrm{d}r)$. 

Now assume a hard-core thinning process on $\Phi_p$ having a conditional thinning probability $\eta_{m \to c}(r)$. Then the probability of having a void infinitesimal annulus $\mathcal{A}(x_o,r,\mathrm{d}r)$ in the child hard-core process given two conditions: (i) a void ball $b(x_o,r)$ and (ii) a point $x_o \in \Phi_m$, is obtained as follows:

%\scalebox{0.88}{\parbox{\linewidth}{
\begin{align}\label{Eq_3}
v_{m \to c}(r) &= \mathbb{P} \left[\Phi_c \cap \mathcal{A}(x_o,r,\mathrm{d}r) = \emptyset |   \Phi_c \cap b(x_o,r) = \emptyset , x_o \in \Phi_m \right] \nonumber \\
&=1 - \eta_{ m \to c}(r) \left[1- \exp\left(-2\pi r \lambda_p  \mathrm{d}r\right) \right] \nonumber \\
&\approx 1- 2 \pi r \lambda_p~\eta_{m \to c}(r) ~  \mathrm{d}r, 
\end{align} %}  }\\
by using Taylor expansion around zero, where $e^x=1+x+O(x^2)$, so that $e^x\approx1+x$  as $x\to 0$.

\begin{figure}
	\centering
	\includegraphics[width=0.75\linewidth]{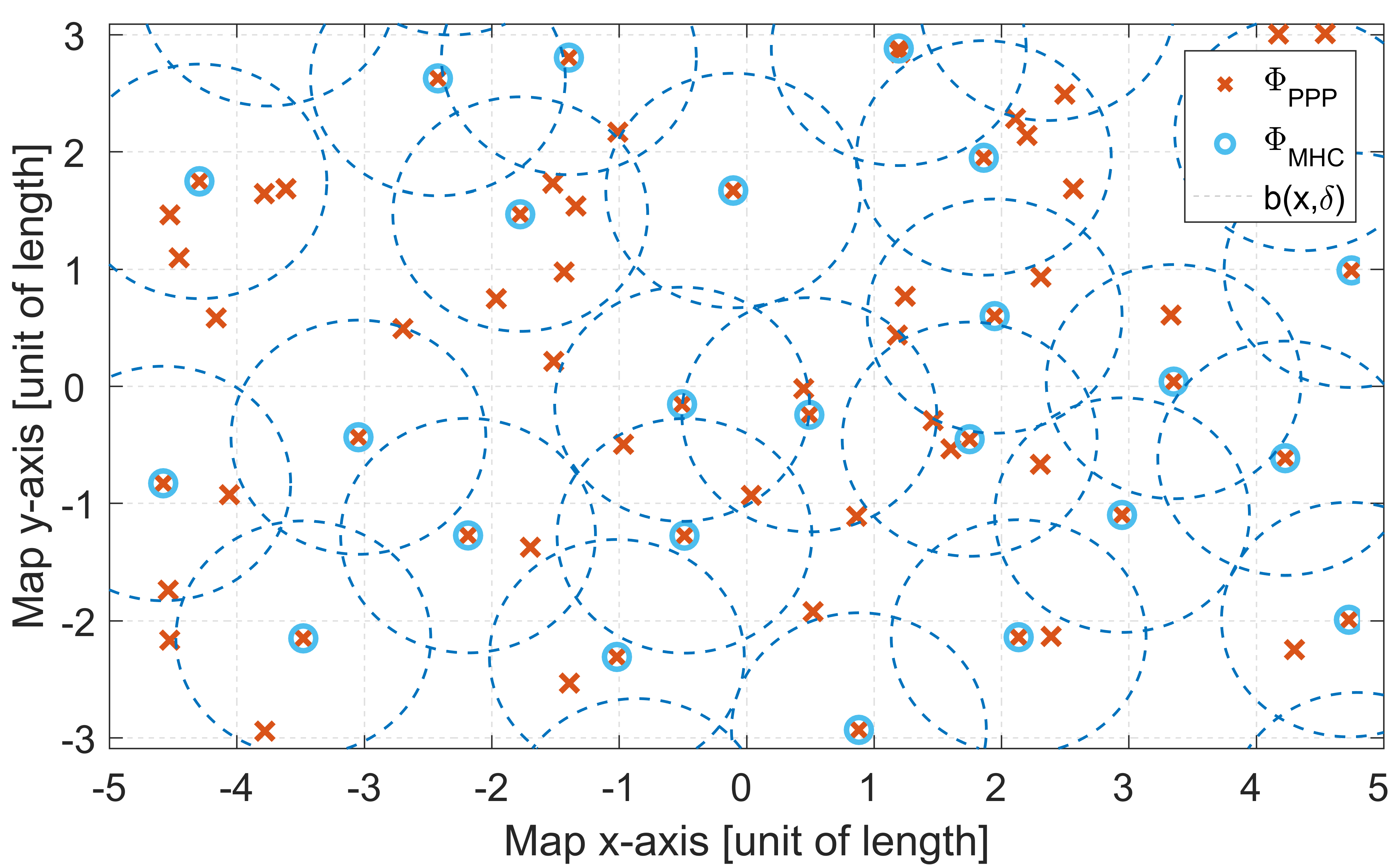}
	\caption{The hard-core region indicated in dotted line for every member $x\in \Phi_{\mathrm{MHC}}$ thinned out of a homogeneous parent PPP $\Phi_\mathrm{PPP}$.}\label{Fig_MHC}
\end{figure}

Let $A_o, A_1, \dots A_n,\dots A_{N-1}$ represent the events of all annuli between the point $x_o$ and circle $b(x_o,R)$ to be void, these annuli are $\mathcal{A}(x_o,r_o,r_o+\Delta r),~\mathcal{A}(x_o,r_1,r_1+\Delta r),\dots,~\mathcal{A}(x_o,r_n,r_n+\Delta r),\dots,~\mathcal{A}(x_o,r_{N-1},r_{N-1}+\Delta r)$, where $r_n = n \Delta r$, $\Delta r=\frac{R}{N}$, and $N$ is a positive integer representing the number of annuli. Then the probability of a void ball $b(x_o,R)$ in $\Phi_c$ given a point $x_o \in \Phi_m$  is:
%\scalebox{0.95}{\parbox{\linewidth}{
\begin{align}\label{Eq_4}
	&V(R) = \mathbb{P} \left[\Phi_c \cap b(x_o,R) = \emptyset | x_o \in \Phi_m \right] \nonumber \\
	&= \mathbb{P} \left[A_o  \cap \dots \cap A_{N-1} | x_o \in \Phi_m\right] \nonumber \\
	&= \mathbb{P} \left[A_o | x_o \in \Phi_m \cap \dots \cap A_{N-1} | x_o \in \Phi_m \right] \nonumber \\
	&\stackrel{(a)}{=} \mathbb{P} \left[\bigcap_{n=1}^{N-1} B_n \right] \stackrel{(b)}{=} \prod_{n=2}^{N-1} \mathbb{P} \left[B_n|\bigcap_{i=1}^{n-1}B_i\right]\times \mathbb{P} \left[B_o\right]\nonumber \\
	&= \lim\limits_{N\to\infty} \prod_{n=1}^{N-1} v_{m \to c}(r_n) = \lim\limits_{N\to\infty}\prod_{n=1}^{N-1} \left[ 1- 2 \pi r \lambda_p~\eta_{m \to c}(r_n) ~  \Delta r \right]
\end{align}%}}
where in step $(a)$ we took the event $B_n  \stackrel{\bigtriangleup}{=} A_n | x_o \in \Phi_m$. Step $(b)$ follows form the multiplication rule of dependent events, and the last step by substituting from (\ref{Eq_3}). By taking the logarithm of both sides of (\ref{Eq_4}) we get: 
\begingroup
\allowdisplaybreaks
%\scalebox{0.88}{\parbox{\linewidth}{
\begin{align}
&\ln\left[V(R)\right] = \lim\limits_{N\to\infty} \sum_{n=1}^{N-1}\ln \left[ 1- 2 \pi r \lambda_p~\eta_{m \to c}(r_n) ~  \Delta r \right]\nonumber \\
&\stackrel{(a)}{\approx} - \lim\limits_{N\to\infty} \sum_{n=1}^{N-1} 2 \pi r \lambda_p~\eta_{m \to c}(r_n) ~  \Delta r \stackrel{(b)}{=} - \int_{r=0}^{R} 2 \pi r \lambda_p~\eta_{m \to c}(r) ~  \mathrm{d}r,
\end{align}%}}
\endgroup
step $(a)$ follows from Taylor expansion  of the natural logarithm around zero, where $\ln(1+x)=x+O(x^2)$, thus $\ln(1+x)\approx x$ as $x \to 0$. Using the theory of Riemann\footnote{The proof is detailed in the Appendix} integration it can be shown that step $(b)$ converges to Riemann integral as the number of partitions (annuli) $N\to \infty$. Accordingly $V(R)$ is found by taking the natural exponential of both sides. While the event of having a neighbour $x \in \Phi_c$ to a point $x_o \in \Phi_m$ within a radius $R$ is the complement of the void probability of $b(x_o,R)$, thus: $F_{m\to c}(R)=1-V(R)$ and the proof of Theorem 1 is complete.
\end{proof}
%Another way for proofing this theorem is by using the probability generating functional which states that for a measure $f(x)$ on a homogeneous PPP point process $\Phi$ can be written as:
%\begin{equation}
%\mathbb{E} \left[ \prod_{\mathcal{R} \in \Phi} f(x)\right] = \exp \left(\int_{\mathcal{R}} f(x) \mathrm{d} x\right)
%\end{equation}
We can test this theorem against the trivial thinning from PPP to the same PPP, ie $\Phi_p \stackrel{t}{\to}  \Phi_c=\Phi_p$, and taking $\Phi_m=\Phi_p$, thus $\eta_{m \to c}(r)= 1$ (because there is no correlation between points in PPP), and the contact distance CDF reduces to the well-known expression of PPP:
\begin{align} \label{eq_6}
		F_{{PPP \to PPP}}(R) &= 1 - \exp \left( - \int_{0}^{R} 2 \pi r\lambda_p ~ \mathrm{d} r\right) \nonumber \\
		&= 1- \exp\left( - \pi \lambda_p R^2\right)
\end{align}

\section{MHC to MHC Contact Distance}
In this study we focus on MHC type-II since it has a higher density than MHC type-I \cite{chiu2013stochastic}\cite{Haenggi_Interference}, thus yielding a more efficient child point process out of the generating parent PPP process. To construct an MHC type-II process,  we start with some real number $\delta>0$ and a parent stationary Poisson point process (PPP) denoted by $\Phi_p$, having a homogeneous intensity of $\lambda_p$ (points per unit area), and then for each point $x \in \Phi_p$ a uniformly distributed random mark $\mathcal{M}(x) = \boldsymbol{t} ~\sim \mathcal{U}(0,1)$ is assigned. A point $x \in \Phi_p$ will be flagged-for-removal if its mark is not the least compared to all other points within a ball of radius $\delta$, $b(x,\delta)$, around it.  Only after this test is done for all points in $\Phi_p$, the points that were flagged-for-removal are removed. The retained points, constituting the resulting  MHC process described by \cite{Book_Haenggi}:
\begin{align}\label{Eq_Matern}%\resizebox{.9\hsize}{!}{$
\Phi_{\mathrm{MHC}} \stackrel{\bigtriangleup}{=}  \{ x \in \Phi_p : &~\mathcal{M} ( x ) < \mathcal{M} ( y ) \nonumber \\
&\forall y \in \Phi_p \cap b(x,\delta)  \setminus \{ x \}  \},%$},
\end{align}
where $\delta$ is the \emph{hard-core distance}. In Fig.\ref{Fig_MHC} we illustrate a parent PPP thinned to an MHC process with parameters $\delta=1$ and $\lambda_p=1$. 
It can be easily shown that the resulting MHC has an intensity of \cite{Book_Haenggi}:
\begin{equation}\label{Eq_MHC_Density}
\lambda_{\mathrm{MHC}} = \frac{1-\exp(-\lambda_p \pi \delta^2)}{\pi \delta^2}.
\end{equation}

\begin{figure*}[!t]
	\normalsize
	\begin{align} \label{eq_9}
	&\kappa_1(r,\delta) = \int_{0}^{1} \int_{0}^{t_o} \exp(-  t_o \lambda_p \pi \delta^2 ) \times \exp \left[- t \lambda_p (\pi \delta^2- l_2(r,\delta)) \right] \mathrm{d}t \mathrm{d}t_o \nonumber \\
	&\kern 3em+\int_{0}^{1} \int_{0}^{t} \exp\left[-  t \lambda_p (\pi \delta^2 -\left[l_2(r,\delta)-l_1(r,\delta)\right] ) \right] \times \exp \left[- t_o \lambda_p (\pi \delta^2- l_1(r,\delta)) \right] \mathrm{d}t_o \mathrm{d}t \nonumber \\
	&= \frac{1-e^{-\pi  \delta ^2 \lambda_p }}{\pi ^2 \delta ^4 \lambda_p ^2-\pi  \delta ^2 \lambda_p ^2 l_2(r,\delta)}+\frac{e^{\lambda_p  \left(l_2(r,\delta)-2 \pi  \delta ^2\right)}-1}{\lambda_p ^2 \left(-3 \pi  \delta ^2 l_2(r,\delta)+l_2(r,\delta)^2+2 \pi ^2 \delta ^4\right)} + \frac{\frac{1-e^{-\lambda_p  \left(l_1(r,\delta)-l_2(r,\delta)+\pi  \delta ^2\right)}}{l_1(r,\delta)-l_2(r,\delta)+\pi  \delta ^2}+\frac{e^{\lambda_p  \left(l_2(r,\delta)-2 \pi  \delta ^2\right)}-1}{2 \pi  \delta ^2-l_2(r,\delta)}}{\lambda_p ^2 \left(\pi  \delta ^2-l_1(r,\delta)\right)} 
	\end{align} 
\end{figure*}
\begin{figure}%[H]	
	\centering
	\includegraphics[width=0.7\linewidth]{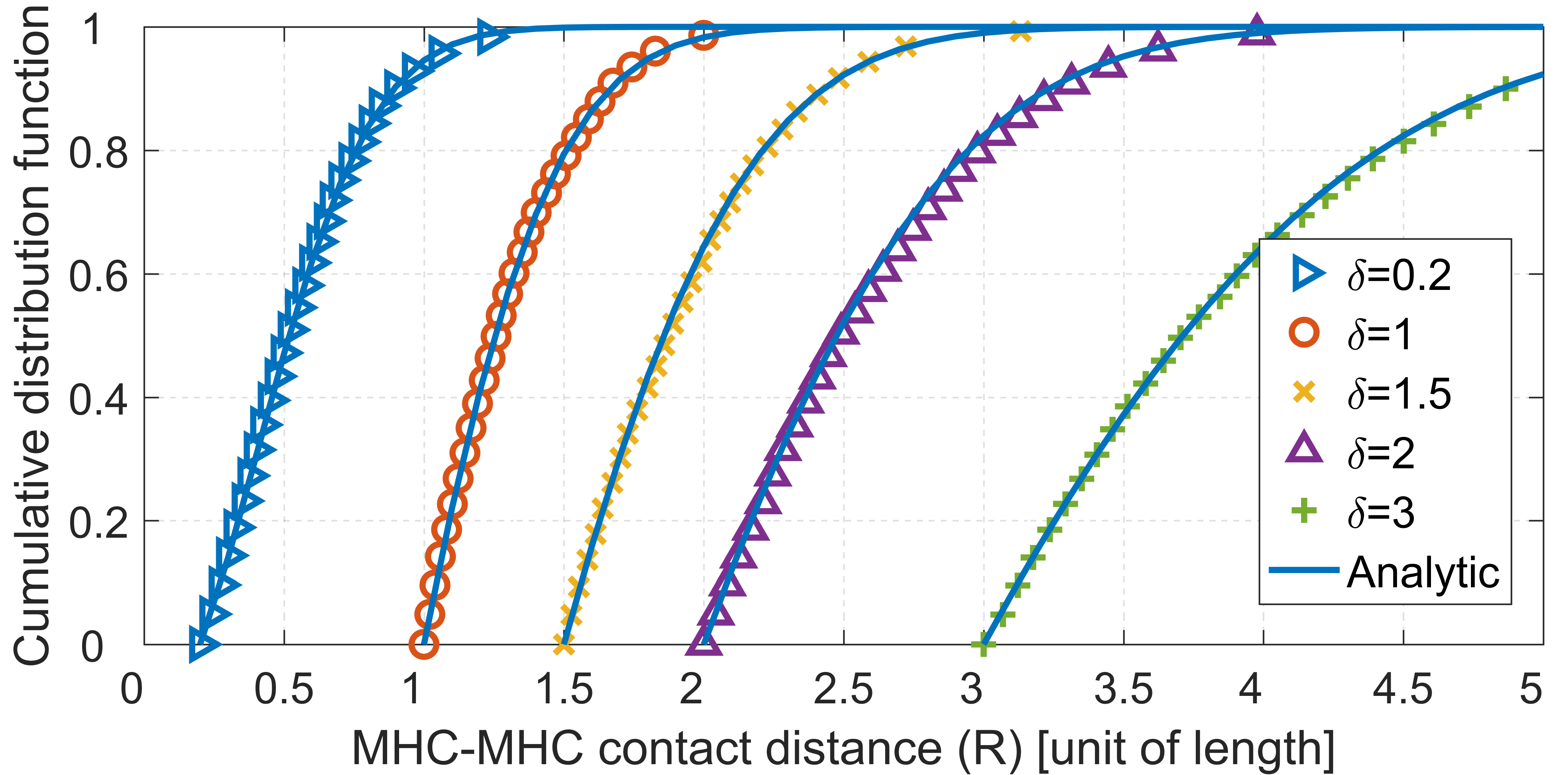}
	\includegraphics[width=0.7\linewidth]{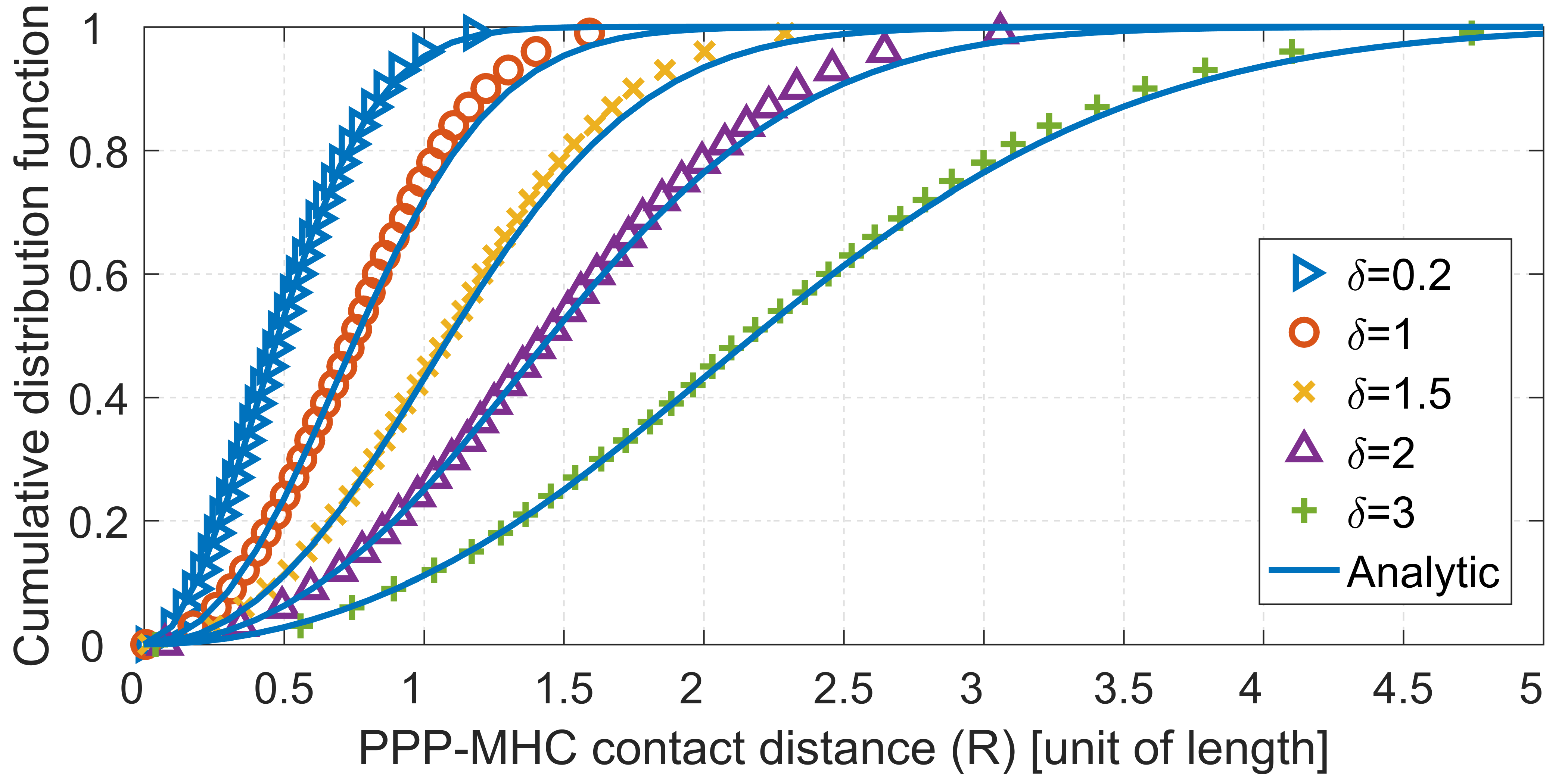}
	\caption{The cumulative distribution function of the distance in two cases (up to down): (i) between an MHC point and its nearest neighbour from the same process, (ii) a point in PPP and its nearest neighbour in an independent MHC process, comparing simulations (discrete points) and the analytical plot (continuous line).% (iii) Comparison between simulations and the approximated cumulative distribution function of the distance in between CMHC to its sibling MHC.
	}\label{Fig_MHC_MHC}
\end{figure}
\begin{figure}[ht]
	\centering
	\includegraphics[width=0.5\linewidth]{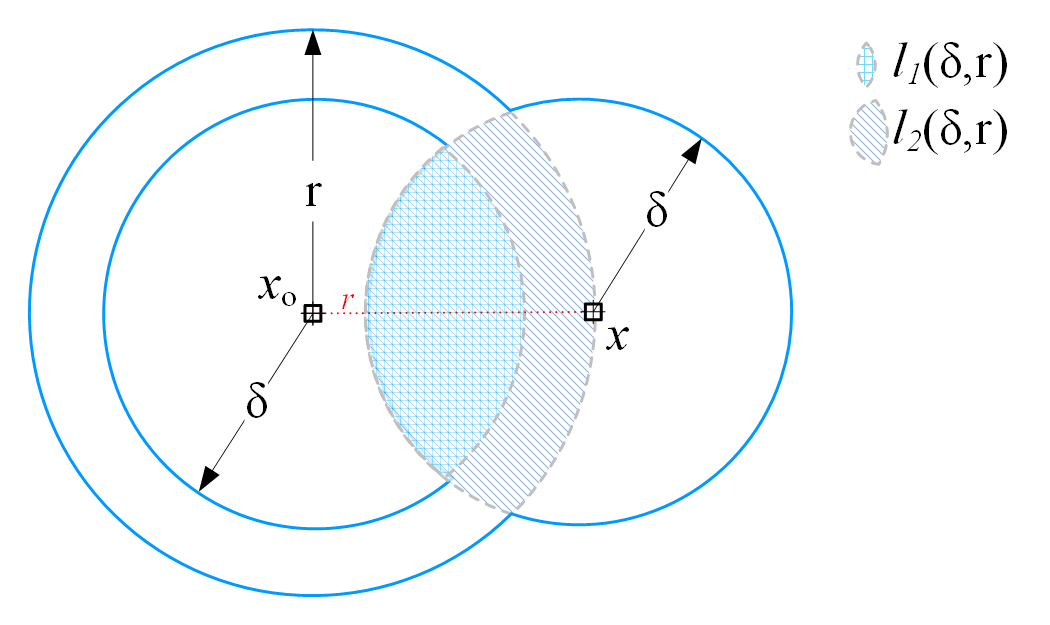}
	\includegraphics[width=0.5\linewidth]{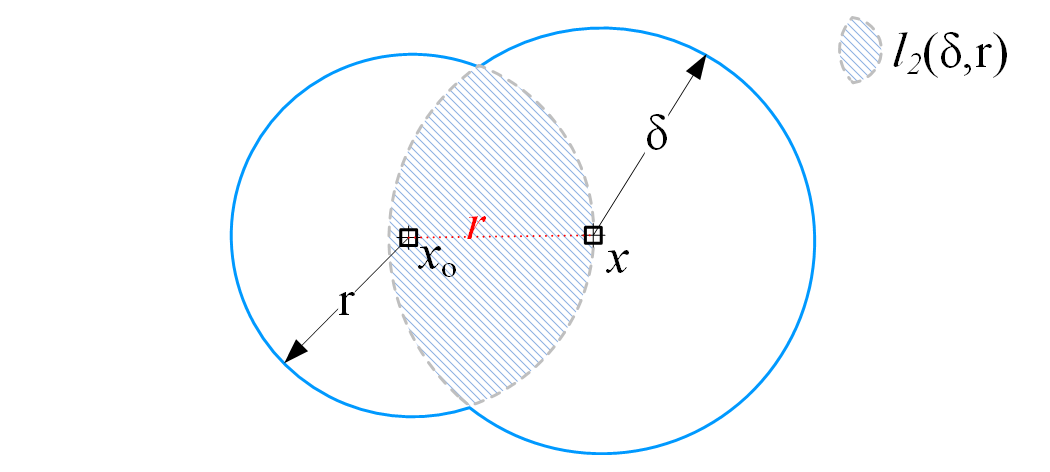}
	\caption{(Up) An illustration of the two-point interaction between $x_o$ and $x$ given a void ball, showing the resulting lenses used in the integration (\ref{eq_9}). (Down) the interaction in the case of PPP $\to$ MHC in (\ref{eq_15}).}\label{Fig_lenses}
\end{figure}

Our target is to calculate the CDF of the contact distance from a typical point in MHC to its nearest point in the same process, thus we need the conditional thinning probability $\eta_{\mathrm{MHC}\to \mathrm{MHC}}$, formed as the following:
\begingroup
\allowdisplaybreaks
\begin{align}\label{eq_8}
&\eta_{\mathrm{MHC}\to \mathrm{MHC}}(r,\delta)=\mathbb{P} \left[x \in \Phi_\mathrm{MHC}  | \right.\nonumber \\
&\quad \quad \quad \quad \left. \Phi_\mathrm{MHC} \cap \mathcal{A}(x_o,\delta,r) = \emptyset, x_o \in \Phi_\mathrm{MHC} \right] \nonumber \\
&=\frac{\mathbb{P}\left[x \in \Phi_\mathrm{MHC} \cap x_o \in \Phi_\mathrm{MHC} | \Phi_\mathrm{MHC} \cap \mathcal{A}(x_o,\delta,r) = \emptyset \right]}{\mathbb{P} \left[x_o \in \Phi_\mathrm{MHC} | \Phi_\mathrm{MHC} \cap \mathcal{A}(x_o,\delta,r)=\emptyset\right]} \nonumber \\
&\stackrel{(a)}{\approx}\frac{\mathbb{P}\left[x \in \Phi_\mathrm{MHC} \cap x_o \in \Phi_\mathrm{MHC} | \Phi_p \cap \mathcal{A}(x_o,\delta,r) = \emptyset \right]}{\mathbb{P} \left[x_o \in \Phi_\mathrm{MHC} | \Phi_p \cap \mathcal{A}(x_o,\delta,r)=\emptyset\right]} \nonumber \\
&=\frac{\kappa_1(r,\delta)}{\kappa_2(r,\delta)},
\end{align}
\endgroup
where we utilized $\mathcal{A}(x_o,\delta,r)$ instead of $b(x_0,r)$ since the ball $b(x_0,\delta)$ is implicitly given to be empty when a point $x_o \in \Phi_\mathrm{MHC}$, the approximation in $(a)$ holds very tight since the conditional thinning probability of an MHC point given a CMHC point is extremely week as verified by simulation\footnote{A void region of MHC points can still have CMHC points, but as the influence of CMHC on MHC is very week, we can consider the original parent process $\Phi_p$ as void when studying the influence on MHC points.}. The numerator $\kappa_1(r,\delta)$ is the two-point Palm probability \cite{chiu2013stochastic} but conditioned on having a void annulus $\mathcal{A}(x_o,\delta,r)$ in $\Phi_p$, thus can be calculated as the following; assume two points $x_o \in \Phi_p$ and $x \in \Phi_p$ exist in the parent PPP with marks $\mathcal{M}(x_o) = t_o$ and $\mathcal{M}(x)=t$ respectively, then we have two mutually exclusive events; either $t_o \ge t$ or $t > t_o$. Accordingly, $\kappa_1(r,\delta)$ is formulated in (\ref{eq_9}) for $r>\delta$, where $l_1(r,\delta)$ is the symmetrical lens formed by the intersection of $b(x_o,\delta)$ and $b(x,\delta)$, while $l_2(r,\delta)$ is the asymmetrical lens formed by the intersection of $b(x_o,r)$ and $b(x,\delta)$, given by: 
\begin{equation}
%\resizebox{.95\linewidth}{!}{$
l_1(r,\delta)=     \left\{
\begin{array}{ll}
2 \delta ^2 \cos^{-1} \left(\frac{r}{2 \delta }\right) -\frac{1}{2} r \sqrt{4 \delta ^2-r^2}&~:~  0<r\leq 2 \delta  \\
0 &~:~ r>2 \delta,  \\
\end{array}
\right.%$}
\end{equation}
\begin{align}
%\resizebox{.95\linewidth}{!}{$
	l_2(r,\delta) =  \left\{
\begin{array}{ll}
\pi r^2 &: 0<r< \frac{\delta}{2} \\
r^2 \cos^{-1} \left(1-\frac{\delta ^2}{2 r^2}\right)+\delta ^2 \cos^{-1}\left(\frac{\delta}{2 r}\right) \nonumber \\ \quad\quad\quad\quad\quad \quad\quad\quad-\frac{1}{2} \delta  \sqrt{4 r^2-\delta ^2} &: r \ge \frac{\delta}{2} 
\end{array}
\right. %$}
\end{align}
These lenses are shown in Fig. \ref{Fig_lenses}. The formulation of $\kappa_1(r,\delta)$ in (\ref{eq_9}) follows because the density of points having a mark less than $t$ is equal to $\lambda_p t$.

The denominator is calculated as the following:
\begingroup
\allowdisplaybreaks
\begin{align}
\kappa_2(r,\delta) &= \mathbb{P} \left[x_o \in \Phi_\mathrm{MHC} | \Phi_p \cap \mathcal{A}(x_o,\delta,r)=\emptyset\right] \nonumber \\
&\stackrel{(a)}{=} \mathbb{P} \left[x_o \in \Phi_\mathrm{MHC} \right] = \int_{0}^{1} \exp(-t_o \lambda_p \pi\delta^2) \mathrm{d}t_o\nonumber \\ 
& = \frac{1-\exp(- \lambda_p \pi\delta^2)}{\lambda_p \pi\delta^2} \stackrel{(b)}{=} \rho_\mathrm{MHC}
\end{align}
\endgroup
where $(a)$ follows from the fact that the probability is independent of the condition, while $(b)$ is known in the literature as the MHC retention probability \cite{chiu2013stochastic}. Thus the contact distance CDF is obtained by applying the result of Theorem 1:
\begin{equation}\label{eq_12}
%\resizebox{.95\linewidth}{!}{$
	F_{{\mathrm{MHC} \to \mathrm{MHC}}}(R|\delta) = 1 - \exp \left( - \int_{\delta}^{R} 2 \pi \lambda_p \frac{\kappa_1(r,\delta)}{\rho_\mathrm{MHC}} ~ r\mathrm{d} r\right),%$}
\end{equation}
conditioned on a given hard-core distance $\delta$, where this CDF is plotted in Fig. \ref{Fig_MHC_MHC} for fixed parent intensity $\lambda_p=1~[\text{unit length}]^{-2}$, and different $\delta$. Noting that the lower integration limit in (\ref{eq_12}) starts from $\delta$ because of the hard-core constraint on the minimum allowable distance, where $\kappa_1(r,\delta)=0$ for $r \le \delta$.

\section{PPP to MHC Contact Distance}
In this section we study the contact distance between a typical point in a PPP process and its nearest neighbour from an independently generated MHC process. The rationale behind this can be found in wireless communications for example when users are distributed according to an independent homogeneous PPP process and the base stations as MHC. Thus we can write:
%\begingroup
%\allowdisplaybreaks
\begin{align}\label{eq_14}
 &\eta_{\mathrm{PPP}\to \mathrm{MHC}}=\mathbb{P} \left[x \in \Phi_\mathrm{MHC}  | \right.\nonumber \\
 &\quad \quad \quad \quad \left. \Phi_\mathrm{MHC} \cap b(x_o,r) = \emptyset, x_o \in \Phi_\mathrm{PPP} \right] \nonumber \\
 &\stackrel{(a)}{=}\mathbb{P} \left[x \in \Phi_\mathrm{MHC}  | \Phi_\mathrm{MHC} \cap b(x_o,r) = \emptyset \right] \nonumber \\
 &~\approx\mathbb{P} \left[x \in \Phi_\mathrm{MHC}  | \Phi_p \cap b(x_o,r) = \emptyset \right],
\end{align}
%\endgroup
step $(a)$ is because of $\Phi_m=\Phi_\mathrm{PPP}$ and $\Phi_c=\Phi_\mathrm{MHC}$ are independent by assumption, while the last step is taken similar to step $(a)$ in (\ref{eq_8}). Accordingly we can write:
\begin{align}\label{eq_15}
\eta_{\mathrm{PPP}\to \mathrm{MHC}}(r,\delta)&=\int_{0}^{1} \exp\left[- t \lambda_p \left(\pi \delta^2 -  l_2(r,\delta) \right) \right] \mathrm{d}t\nonumber \\
&=\frac{1-\exp\left[- \lambda_p \left(\pi \delta^2 -  l_2(r,\delta) \right)\right]}{\lambda_p \left(\pi \delta^2 -  l_2(r,\delta)\right)}.
\end{align}
Thus, using Theorem 1, the CDF is calculated from:
\begin{equation}
%\resizebox{.95\linewidth}{!}{$
	F_{{\mathrm{PPP} \to \mathrm{MHC}}}(R|\delta) = 1 - \exp \left( - \int_{0}^{R} 2 \pi \lambda_p r \eta_{\mathrm{PPP}\to \mathrm{MHC}}(r,\delta) ~ \mathrm{d} r\right),%$}
\end{equation}
where the plot is shown in Fig. \ref{Fig_MHC_MHC} for different $\delta$ and for $\lambda_p=1~[\text{unit length}]^{-2}$.
The contact distance can be tested against the case when $\delta \to 0$ and accordingly MHC converges to PPP $(\Phi_c=\Phi_\mathrm{MHC} )\stackrel{d}{\to}(\Phi_p = \Phi_\mathrm{PPP})$, where $d$ refers to the \emph{distribution}, and the conditional probability asymptote is:
\begin{equation}
\lim\limits_{\delta \to 0} \eta_{\mathrm{PPP}\to \mathrm{MHC}}(r,\delta) =1 ,~\forall r>0,
\end{equation}
similar to (\ref{Eq_3}), we can reduce (\ref{eq_main}) to the well-known expression of PPP, however we should note here that in the latter case we are measuring the contact distance between a $\Phi_m=\Phi_\mathrm{PPP}$ and another PPP process $\Phi_c=\Phi'_\mathrm{PPP}$, this is identical to measuring the contact distance within $\Phi_c=\Phi'_\mathrm{PPP}$ explained by Palm theory \cite{Book_Baddeley}.

\section{CMHC to MHC Contact Distance Approximation}
When an MHC hard-core thinning is performed on a parent PPP process a resulting child MHC is formed, while the residual points form another process named the complementary-MHC or CMHC defined as:
\begin{equation}\label{Eq_CMatern}%\resizebox{.9\hsize}{!}{$
\Phi_\mathrm{CMHC}\stackrel{\bigtriangleup}{=}  \{ x \in \Phi_p : \mathcal{M} ( x ) \geq \mathcal{M} ( y ) \exists y \in \Phi_p \cap b(x,\delta)  \setminus \{ x \}  \},% $}
\end{equation}
where $\Phi_\mathrm{MHC} \cup \Phi_\mathrm{CMHC} = \Phi_p$ and $\Phi_\mathrm{MHC} \cap \Phi_\mathrm{CMHC} = \emptyset$. The approximated  conditional thinning probability is formed as the following:
\begin{equation}%\resizebox{.95\linewidth}{!}{$
\eta_{\mathrm{CMHC}\to\mathrm{MHC}}(r,\delta) = \mathbb{P} \left[x \in \Phi_\mathrm{MHC}  |  \Phi_\mathrm{MHC} \cap b(x_o,r) = \emptyset, x_o \in \Phi_\mathrm{CMHC} \right],%$}
\end{equation}
however, there is no known method to obtain this probability in a closed form. Instead we relay on the insignificant correlation between the CMHC process and its sibling MHC, as verified by simulation. Accordingly we treat $x_o \in \Phi_\mathrm{CMHC}$ as if it is originated from an independent PPP $x_o \in \Phi_\mathrm{PPP}$. Thus, we consider $\eta_{\mathrm{CMHC}\to\mathrm{MHC}}\approx\eta_{\mathrm{CMHC}\to\mathrm{PPP}} $. This approximation holds well as indicated in Fig. \ref{Fig_CMHC_MHC}.

\begin{figure}[t]
	\centering
	\includegraphics[width=0.75\linewidth]{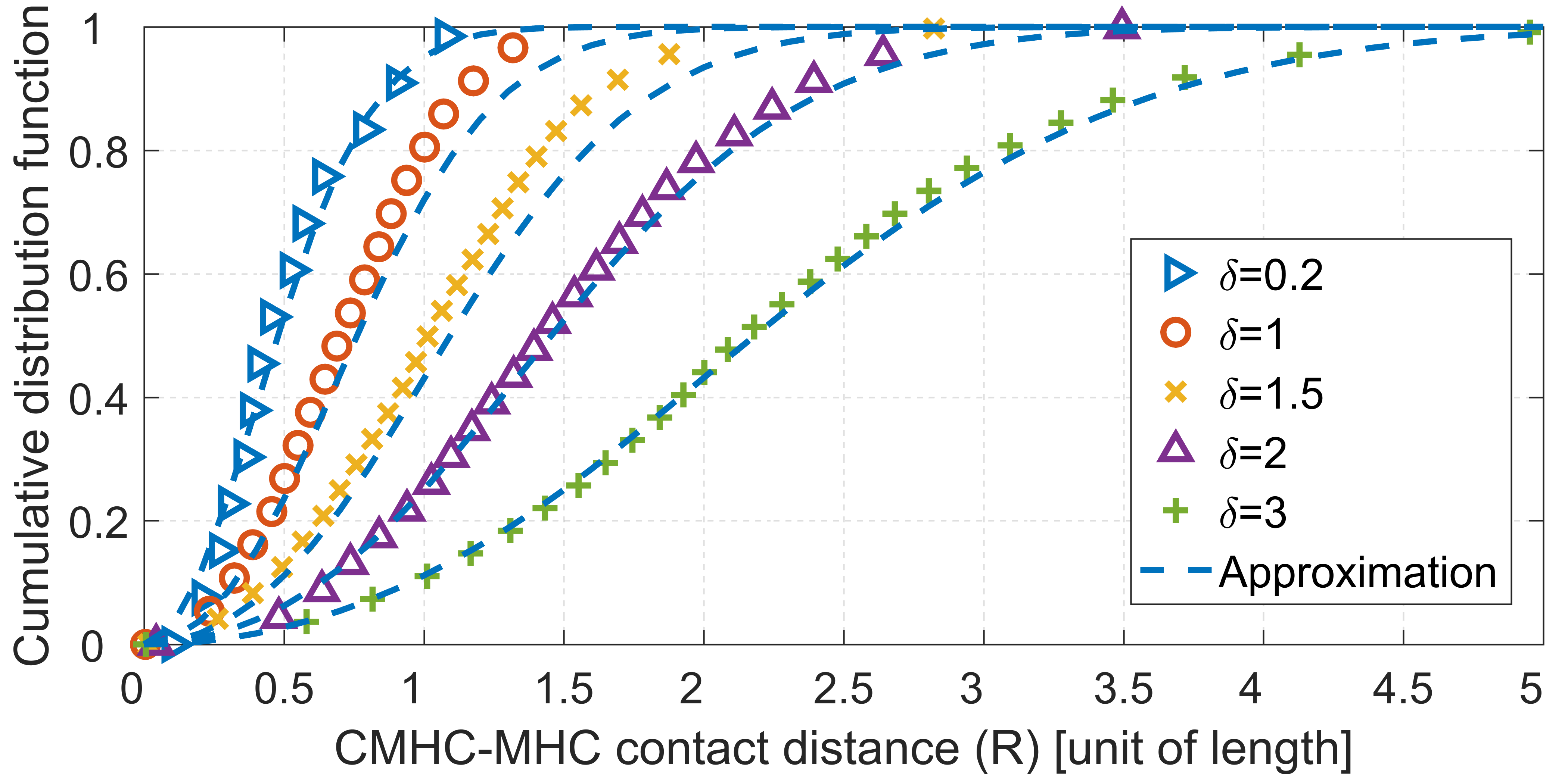}
	\caption{Comparison between simulations and the approximated cumulative distribution function of the distance  between CMHC to its sibling MHC.}\label{Fig_CMHC_MHC}
\end{figure}

\section{Conclusion}
In this paper we presented an analytic framework for obtaining the statistics of the contact distance in hard-core point processes. We applied this framework on the well-known Mat\'{e}rn point process to obtain the contact distance for three different cases; (i) an analytic formula for MHC $\to$ MHC and (ii) PPP $\to$ MHC, and an approximated solution for CMHC $\to$ MHC. We tested the framework against Monte-Carlo simulation, showing good match. Future work might include obtaining an exact expression for CMHC $\to$ MHC and the contact distance statistics of the $n^\mathrm{th}$ neighbour.

\appendix
In order to poof the convergence to Riemann integral we start from the 4$^\mathrm{th}$ line in equation (\ref{Eq_4}) we write:
\begin{equation}\label{Eq_A1}
V(R)	= \prod_{n=2}^{N-1} \mathbb{P} \left[B_n|\bigcap_{i=1}^{n-1}B_i\right]\times \mathbb{P} \left[B_o\right]
\end{equation}
Where we define a function $\hat{V}(n,N)$ such that $V(R) = \prod_{n=2}^{N-1}\hat{V}(n,N)$, let this function be written as 
\begin{equation}\label{Eq_A3}
\hat{V}(n,N)= 1-\zeta(n,N) \Delta r,
\end{equation}
we substitute in (\ref{Eq_A1}) and take the natural logarithm of both sides, accordingly
\begin{equation}\label{Eq_A4}
\ln\left[\hat{V}(n,N)\right]= \sum_{n=2}^{N-1} \ln\left[1-\zeta(n,N) \Delta r\right],
\end{equation}
where (\ref{Eq_A4}) is valid for all integer choice of $N$, accordingly we take the following limit
\begin{align} \label{Eq_A5}
\ln\left[V(n,N)\right] &= \lim\limits_{N\to\infty}\sum_{n=2}^{N-1} \ln\left[1-\zeta(n,N) \Delta r\right] \nonumber \\
&= \sum_{n=2}^{N-1} \lim\limits_{N\to\infty} \ln\left(\left[1-\zeta(n,N) \right]\Delta r\right)  \nonumber \\
&\stackrel{(a)}{\approx} - \sum_{n=2}^{N-1} \lim\limits_{N\to\infty} \left[\zeta(n,N) \Delta r\right]  \nonumber \\
&=- \lim\limits_{N\to\infty} \sum_{n=2}^{N-1}  \left[\zeta(n,N) \Delta r\right],
\end{align}
where step $(a)$ follows from Taylor expansion of the natural logarithm around zero, where $\ln(1+x)=x+O(x^2)$, thus $\ln(1+x)\approx x$ as $x \to 0$. Since the function $\zeta(n,N)$ is bounded and monotonic, then the summation (\ref{Eq_A5}) is, by definition, either the lower or the upper Riemann sums \cite{hunter2012introduction}, and the function is Riemann integrable as per theorem 1.21 in \cite{hunter2012introduction}. Accordingly the limit in (\ref{Eq_A5}) converges to Riemann integral as per theorem 1.17 in \cite{hunter2012introduction}, thus
\begin{equation}\label{Eq_A6}
\ln\left[V(n,N)\right]= \int_{0}^{R}\lim\limits_{N\to\infty} \left[\zeta(n,N) \Delta r\right],
\end{equation}
where the above limit can be evaluated starting from the definition in (\ref{Eq_A3}) as taking the limit of both sides
\begin{align}
\lim\limits_{N\to\infty} \hat{V}(n,N) &= \lim\limits_{N\to\infty} \left[1-\zeta(n,N) \Delta r\right] \nonumber \\
v_{m \to c}(r) &= 1-\lim\limits_{N\to\infty} \left[\zeta(n,N) \Delta r\right] \nonumber \\
2 \pi r \lambda_p~\eta_{m \to c}(r) ~  \mathrm{d}r &= \lim\limits_{N\to\infty} \left[\zeta(n,N) \Delta r\right]
\end{align}
hence, substituting in (\ref{Eq_A5}) we get the proof.
\ifCLASSOPTIONcaptionsoff

\newpage
\fi

\bibliography{MHC_Contact}

\end{document}